\documentclass[prx,twocolumn,english,superscriptaddress,floatfix,longbibliography]{revtex4-2}
\usepackage[T1]{fontenc}
\usepackage{color}
\usepackage{babel}
\usepackage{amsthm}
\usepackage{amsmath}
\usepackage{amssymb}
\usepackage{microtype}

\usepackage{chemformula}
\usepackage[toc,page]{appendix}
\usepackage{graphicx}
\usepackage[unicode=true,pdfusetitle, bookmarks=true,bookmarksnumbered=false,bookmarksopen=false,  breaklinks=false,pdfborder={0 0 0},backref=false,colorlinks=false] {hyperref}
\hypersetup{ colorlinks,linkcolor=myurlcolor,citecolor=myurlcolor,urlcolor=myurlcolor}
\usepackage{tcolorbox}

\makeatletter
\@ifundefined{textcolor}{}
{%
 \definecolor{BLACK}{gray}{0}
 \definecolor{WHITE}{gray}{1}
 \definecolor{RED}{rgb}{1,0,0}
 \definecolor{GREEN}{rgb}{0,1,0}
 \definecolor{BLUE}{rgb}{0,0,1}
 \definecolor{CYAN}{cmyk}{1,0,0,0}
 \definecolor{MAGENTA}{cmyk}{0,1,0,0}
 \definecolor{YELLOW}{cmyk}{0,0,1,0}
 }
\theoremstyle{plain}

\theoremstyle{plain}

\ifx\proof\undefined
\newenvironment{proof}[1][\protect\proofname]{\par
\normalfont\topsep6\p@\@plus6\p@\relax
\trivlist
\itemindent\parindent
\item[\hskip\labelsep
\scshape
#1]\ignorespaces
}{%
\endtrivlist\@endpefalse
}
\providecommand{\proofname}{Proof}
\fi
\theoremstyle{plain}

\providecommand{\lemmaname}{Lemma}
\providecommand{\definitionname}{Definition}
\providecommand{\propositionname}{Proposition}

\usepackage{babel}
\usepackage{txfonts}
\usepackage{colortbl}\definecolor{myurlcolor}{rgb}{0,0,0.7}

\def\ket#1{| #1 \rangle}
\def\bra#1{\langle  #1 |}




\newcommand{\haH}

\newtheorem{theorem}{Theorem}

\newtheorem{definition}[theorem]{Definition}

\definecolor{orange}{RGB}{255,127,0}

\begin{document}
\preprint{APS/123-QED}
\title{Bounding Quantum Advantages in Postselected Metrology}
\author{Sourav Das}
\email{sourav.iisermohali@gmail.com}
 \affiliation{ Department of Physical Sciences, Indian Institute of Science Education and Research (IISER), Mohali, Punjab 140306, India}
\author{Subhrajit Modak}
\email{modoksuvrojit@gmail.com}
\affiliation{ Department of Physical Sciences, Indian Institute of Science Education and Research (IISER), Mohali, Punjab 140306, India}
 \author{Manabendra Nath Bera}
 \email{mnbera@gmail.com}
 \affiliation{ Department of Physical Sciences, Indian Institute of Science Education and Research (IISER), Mohali, Punjab 140306, India}

\begin{abstract}
Weak value amplification and other postselection-based metrological protocols can enhance precision while estimating small parameters, outperforming postselection-free protocols. In general, these enhancements are largely constrained because the protocols yielding higher precision are rarely obtained due to a lower probability of successful postselection. It is shown that this precision can further be improved with the help of quantum resources like entanglement and negativity in the quasiprobability distribution. However, these quantum advantages in attaining considerable success probability with large precision are bounded irrespective of any accessible quantum resources. Here we derive a bound of these advantages in postselected metrology, establishing a connection with weak value optimization where the latter can be understood in terms of geometric phase. We introduce a scheme that saturates the bound, yielding anomalously large precision. Usually, negative quasiprobabilities are considered essential in enabling postselection to increase precision beyond standard optimized values. In contrast, we prove that these advantages can indeed be achieved with positive quasiprobability distribution. We also provide an optimal metrological scheme using three level non-degenerate quantum system.

\end{abstract}
\maketitle

\section{Introduction}
Parameter estimation, which is central to mathematical statistics, is an elementary problem in information theory. Its main objective is to construct and evaluate various methods that  can  estimate  the  values of  parameters of  either an information source or a communication channel. In estimation theory, the Cram\'{e}r-Rao inequality \cite{cramer2016mathematical,rao1992information} expresses a lower bound on the variance of unbiased estimators $\theta_e$ stating that the variance of any such estimator $\text{Var}(\theta_e)$ is at least as high as the inverse of the Fisher information, $\mathcal{I}(\theta)$:
\begin{center}
$\text{Var}(\theta_e)\geq\frac{1}{\mathcal{I}(\theta)}$.
\end{center}
The  Fisher information quantifies the average information learned about an unknown parameter $\theta$ from an experiment. This is a central quantity in metrology as it sets the lower bound of the precision in estimation. A common metrological task is concerned with optimally estimating a parameter that characterizes a physical process. For that, one needs to design an experimental setup that minimizes the estimator's error by maximizing the Fisher information \cite{Giovannetti_2004}. 

It has been shown that utilization of quantum resources can improve the precision of parameter estimations beyond classical limits. This idea is at the basis of the continuously growing research area of quantum metrology that aims at reaching the fundamental bounds in metrology by exploiting quantum probes. The central quantity in quantum metrology is the quantum Fisher information (QFI) which is the optimal Fisher information of a metrological protocol over different measurement settings \cite{helstrom1976quantum,paris2009quantum,doi:10.1119/1.1308267,petz2011introduction}. Consider using $N$ classical probes, each interacting once at a time with the system under study, the estimation error will at best scale as $N^{-\frac{1}{2}}$. This limit is called Standard Quantum Limit (SQL), and it stands for the probes that are at most classically correlated. If quantum probes are allowed, the error scaling improves to $N^{-1}$, called as Heisenberg Limit (HL \cite{maccone2013intuitive,giovannetti2006metrology,zwierz2012ultimate}.  The quest for measurement schemes surpassing the SQL has inspired a variety of clever strategies, employing squeezing of the vacuum \cite{caves1981quantum,dowling2008quantum,vahlbruch2006coherent,barsotti2018squeezed,goda2008quantum}, optimizing the probing time \cite{chaves2013noisy}, monitoring the environment \cite{plenio2016sensing,albarelli2017ultimate}, and exploiting non-Markovian effects \cite{chin2012quantum,smirne2016ultimate}. Besides   the   fundamental   interest   about   ultimate   precision limits, quantum metrology presents different applications, such   as: measurement    on    biological    systems \cite{mauranyapin2017evanescent,taylor2016quantum}, gravitational waves detection \cite{abadie2011gravitational}, atomic clocks \cite{borregaard2013near,ludlow2015optical,katori2011optical}, interferometry with    atomic and molecular matter waves \cite{cronin2009optics,che2019multiqubit,tsarev2018quantum}, Hamiltonian estimation \cite{wang2017experimental,zhang2014quantum,shabani2011estimation,wiebe2014hamiltonian} and other general sensing technologies \cite{gefen2019overcoming,degen2017quantum}.  

It is considered that HL represents a fundamental limit on the sensitivity of quantum measurements. However, different studies have shown that interactions among particles may be a valuable resource for quantum metrology, allowing scaling beyond HL \cite{napolitano2011interaction,boixo2007generalized,choi2008bose} . Naturally, the question arises: What are the alternative protocols that can be relevant to achieving such scaling? A recent theoretical study has reported that the postselected quantum experiments can be used to overcome HL by enabling a quantum state to carry more Fisher information \cite{arvidsson2020quantum}. The reason behind  this benefit is claimed to be the negative quasiprobability distribution \cite{dirac1945analogy,PhysRev.44.31,PhysRevA.76.012119,PhysRevLett.101.020401,barut1957distribution,terletsky1937classical,margenau1961correlation} which is an important manifestation of nonclassicality. Also, it has been shown that improved advantages can be attained when properly conditioned experiments are performed. However, this advantage comes with a lower rate of successful postselection. 

 In this article, we go beyond this restrictive setting and ask, can this advantage be bounded fundamentally irrespective of any accessible quantum resources. We conclude that it can. Using geometric arguments, we derive a bound of these advantages using weak value optimization and show that the intrinsic weak values of the system observable play a key role in bounding this advantage. We construct a preparation-and-postselection procedure that saturates the bound using a three-level non-degenerate quantum system. Surprisingly, at the saturation point, the quasiprobability distribution of this setting turns out to be positive. So far, the negative quasiprobability distribution is considered essential for postselected QFI to overcome HL. Our scheme achieves the same without accounting for any negative or non-real elements in the quasiprobability distribution. Finally, we propose an alternative way to understand this quantum advantage.

\section{Postselected quantum Fisher information} \label{Sec:FishInf}
A typical situation in quantum parameter estimation is to estimate a parameter $\theta$ that is encoded on the system state through some general quantum evolution $\hat{\rho}_{\theta}=\mathcal{E}_{\theta}(\hat\rho)$, where $\mathcal{E}_{\theta}$ is a trace-preserving completely positive map. The parameters could be phases of light in interferometers, unitary phase shift, the decay constant of an atom, strength of magnetic or gravitational fields, etc. In order to characterize how efficient an estimation procedure is, we compare protocols based on their mean-square estimation error (MSE),
\begin{equation}
    \text{MSE}(\theta_{\text{true}})=\mathbb{E}_{\text{data}}[(\theta(\text{data})-\theta_{\text{true}})^2],
\end{equation}
where $\theta$ is an estimator for $\theta_{\text{true}}$ and $\mathbb{E}_{\text{data}}[.]$ denotes an expectation over data. The achievable precision in parameter estimation with a quantum system has lower bound on the MSE of unbiased estimators.  For quantum estimation problems, there is a strict lower bound on the MSE of unbiased estimators, known as the \emph{quantum Cram\'{e}r-Rao bound} (QCRB). The bound is expressed in terms of the quantum Fisher information $\mathcal{I}_{Q}(\theta)$ associated with the state $\hat{\rho}_{\theta}$ that encodes the parameter. For measurements on $N$ copies of the system, the QCRB is expressed as
\begin{equation}
    \text{MSE}(\theta)\geq\frac{1}{N\mathcal{I}_{Q}(\theta)}.
\end{equation}

\begin{figure}[!h]\label{mqa}
	\centering{
		\resizebox{90mm}{!}{\includegraphics{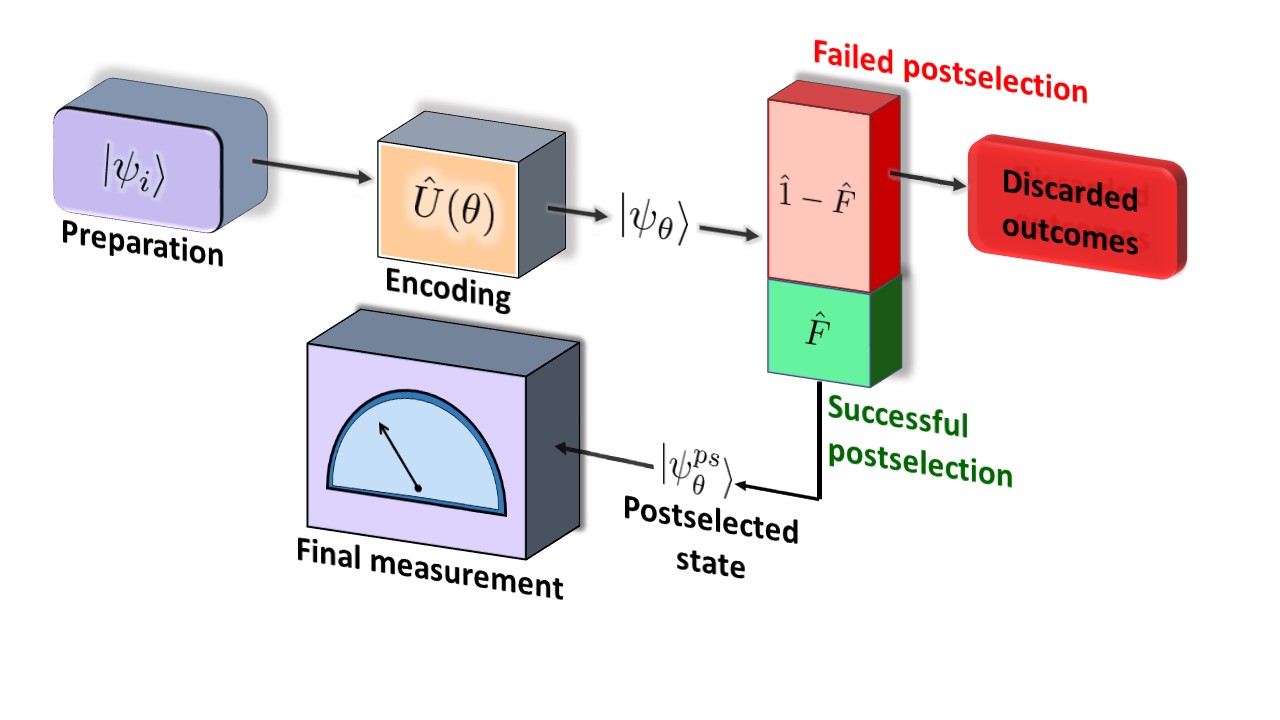}}
		\caption{A scheme for postselected metrology: First, an input quantum state $\ket{\psi_i}$ undergoes a unitary transformation $\hat{U}(\theta) = e^{-i \theta \hat{A}}$: $\ket{\psi_i} \rightarrow \ket{\psi_\theta}$. Second, the quantum state is subjected to a  postselective measurement with the projectors $\{\hat{F}, \hat{1} - \hat{F} \}$. After the successful postselection with $\hat{F}$, the  updated (renormalized)  state becomes $\ket{\psi_\theta^{ps}} = \hat{F}\ket{\psi_\theta}/(\langle \psi_\theta\vert\hat{F}\vert\psi_\theta\rangle)^{1/2}$, which is further analyzed to estimate the parameters. \color{black}}
		\label{figure:sfbac}}
\end{figure}

Consider a quantum experiment that outputs a state $\hat{\rho}_{\theta} = \hat{U}(\theta) \hat{\rho}_0 \hat{U}^{\dagger}(\theta)$, where $\hat{\rho}_0$ undergoes a unitary transformation $\hat{U}(\theta) = e^{-i \hat{A} \theta}$ driven by $\hat{A}$ with $\theta\in\mathbb{R}$. If $\hat{\rho}_{\theta}$ is pure, such that $\hat{\rho}_{\theta} =\ket{\psi_{\theta}} \bra{\psi_{\theta}}$, the optimized quantum Fisher information can be written as
\begin{equation}
\mathcal{I}_{Q}(\theta \vert \hat{\rho}_{\theta}) = 4 \textrm{Var}(\hat{A})_{\hat{\rho}_{0}}.
\end{equation}
The Fisher information can be further optimized over all input states, which gives
\begin{equation}
\max_{\hat{\rho}_0} \big\{ \mathcal{I}_Q(\theta \vert \hat{\rho}_{\theta}) \big\} = 4 \max_{\hat{\rho}_0} \big\{ \textrm{Var}(\hat{A})_{\hat{\rho}_{0}} \big\} =  (\Delta a)^2,
\end{equation}
where $\Delta a$ is the difference between the maximum and minimum eigenvalues of $\hat{A}$. In order to show how post-selection could help retrieve more information per measurement, we present a short review of the postselected prepare-measure experiment. According to the protocol, as shown in Fig.\ref{figure:sfbac}, a projective postselection takes place after $\hat{U}(\theta)$ but before the final measurement. The renormalized quantum state that passes the postselection is $\vert{\psi_{\theta}^{ps}}\rangle \equiv \vert\Psi_{\theta}^{ps}\rangle\Big/ \sqrt{p_{\theta}^{ps}}$, where the unnormalized state is defined as  $\ket{\Psi_{\theta}^{ps}} \equiv \hat{F}\ket{\psi_{\theta}} $ with  $\hat{F} = \sum_{f \in \mathcal{F}^{ps}} \ket{f} \bra{f}$ being the postselecting projection operator, $\mathcal{F}^{ps}$ is the chosen set of postselection and $p_{\theta}^{ps} \equiv \mathrm{Tr}(\hat{F}\hat{\rho}_{\theta})$ is the probability of postselection. Finally, the postselected state undergoes an information-optimal measurement, and one can obtain
\begin{equation}
\label{Eq:FishQpsShort}
\mathcal{I}_{Q}(\theta\vert\psi_{\theta}^{ps}) = 4\langle\dot{\Psi}_{\theta}^{ps}\vert \dot{\Psi}_{\theta}^{ps}\rangle\frac{1}{p_{\theta}^{ps}}-4\vert\langle\dot{\Psi}_{\theta}^{ps}\vert\Psi_{\theta}^{ps}\rangle\vert^{2}\frac{1}{(p_{\theta}^{ps})^{2}},
\end{equation}
where $\vert\dot{\Psi}_\theta^{ps}\rangle \equiv \partial_\theta \ket{\Psi_\theta^{ps}}$. 

This gives the quantum Fisher information available from a quantum state after its postselection. Evidently, $\mathcal{I}_{Q}(\theta\vert\psi_{\theta}^{ps})$ exceeds $\mathcal{I}_{Q}(\theta\vert\hat{\rho}_{\theta})$, since $p_{\theta}^{ps} \leq 1$. Moreover, if the postselections result in a quasiprobability distribution (i.e., with negative entries), the rate of improvement in retrieving information per measurement, can even surpass the standard limit of Fisher information \cite{arvidsson2020quantum}. The postselected quantum Fisher information can be re-expressed in terms of quasiprobability distribution:
\begin{align}
\mathcal{I}_Q(\theta \vert \psi_{\theta}^{ps}) =  4  \sum_{
\substack{a,a^{\prime}, \\ f\in \mathcal{F}^{ps} } 
} \frac{q_{a,a^{\prime},f}^{\hat{\rho}_{\theta}}}{p_{\theta}^{ps}}   a a^{\prime}   - 4   \Big\vert \sum_{
\substack{a,a^{\prime}, \\ f\in \mathcal{F}^{ps} } 
} \frac{q_{a,a^{\prime},f}^{\hat{\rho}_{\theta}}}{p_{\theta}^{ps}}  a \Big\vert^2
\label{ref1} ,
\end{align} 
where $q_{a,a^{\prime},f}^{\hat{\rho}_{\theta}}=\langle f\vert a\rangle \langle a\vert\hat{\rho}_{\theta}\vert a^{\prime}\rangle\langle a^{\prime}\vert f\rangle$ refers an element of doubly extended Kirkwood-Dirac (KD) quasiprobability distribution \cite{dirac1945analogy,PhysRev.44.31}, defined in terms of eigenbases of $\hat{A}$ and $\hat{F}$. This is an extension of standard KD distribution which is obtained under two weak measurements followed by a strong measurement performed sequentially onto the system. If $\hat{A}$ commutes with $\hat{F}$, as they do classically, then they share an eigenbasis for which $q_{a,a^{\prime},f}^{\hat{\rho}_{\theta}} / p_{\theta}^{ps} \in [0, \, 1] $, and the postselected quantum Fisher information is bounded as $ \mathcal{I}_Q(\theta \vert \psi^{ps}_{\theta}) \leq (\Delta a)^2$. In contrast to this, if the quasiprobability distribution contains negative values, the postselected quantum Fisher information can be seen to violate the standard bound: $\mathcal{I}_Q(\theta \vert \psi^{ps}_{\theta}) > ( \Delta a)^2$. In other words, in a classically commuting theory, no post-selected experiment can generate more Fisher information than the optimized prepare-measure experiment. We will see shortly how these non-classical advantages can be accommodated in any quantum experiments to achieve anomalously large quantum Fisher information. In fact, this gain in information can be related to the weak values of the system observable. For that, first we rewrite Eq.~(\ref{ref1}) in the operator form:
\begin{equation}\label{refq}
\begin{split}
\mathcal{I}_Q(\theta \vert \psi_{\theta}^{ps}) = \frac{4}{p_{\theta}^{ps}} \mathrm{Tr}\Big( \hat{F} \hat{A} \hat{U}(\theta) \hat{\rho}_{0} \hat{U}(\theta)^{\dagger} \hat{A} \Big) - \\
~\frac{4}{(p_{\theta}^{ps})^2} \Big\vert  \mathrm{Tr}\Big(  \hat{F} \hat{U}(\theta) \hat{\rho}_{0} \hat{U}(\theta)^{\dagger} \hat{A} \Big)\Big\vert^2 .
\end{split}
\end{equation}
The aim is to choose $\hat{F}$ and $\hat{\rho}_{0}$ in a way that $\mathcal{I}_Q(\theta \vert \psi_{\theta}^{ps})$ approaches maximum. We begin with a pure initial state $\ket{\psi_i}$ and the postselction is carried out with a  projector $\hat{F}=\sum_{{f_k}\in \mathcal{F}^{ps} }\ket{f_{k}}\bra{f_{k}}$ on the updated state $\ket{\psi_{\theta}}=\hat{U}(\theta) \ket{\psi_i}$, where $\mathcal{F}^{ps}$ is the postselection basis. Then the postselction probability becomes,
\begin{equation}
  p_{\theta}^{ps}=\sum_{{f_k}\in \mathcal{F}^{ps} }\vert\langle\psi_{\theta}\vert f_{k}\rangle\vert^{2},
\end{equation}
and Eq.~(\ref{refq}) leads to
\begin{equation}
\label{Fisheq26}
\mathcal{I}_Q(\theta\vert\psi_{\theta}^{ps})(p_{\theta}^{ps})^{2}=4\sum_{i<j}p_{\theta i}^{ps}p_{\theta j}^{ps}\Big\vert A_{w}^{i}-A_{w}^{j}\Big\vert^{2},
\end{equation}
where $A_{w}^{k}=\frac{\langle\psi_{\theta}\vert\hat{A}\vert f_{k}\rangle}{\langle\psi_{\theta}\vert f_{k}\rangle}$  and $p_{\theta k}^{ps}=\vert\langle\psi_{\theta}\vert f_{k}\rangle\vert^{2}$. It can be seen that there is no enhancement in postselected QFI when the system is postselected with rank-$1$ projector.

Note, $A_{w}^{k}$ is nothing but the weak values corresponding to the observable $\hat{A}$ between the preselected state $\ket{\psi_{\theta}}$ and the $k^{\text{th}}$component of the postselcted state $\ket{f_{k}}$, and $p_{\theta k}^{ps}$ is the postselection probability \footnote{These weak values are not directly measured through standard weak measurement. Rather, we see their effects indirectly as they play a key role in determining the postselected QFI. We shall refer to them as the intrinsic weak values of the observable} (See Supplementary Material for details). Therefore, optimizing enhancement in postselected metrology can be connected to the weak value optimization, as we discuss below.

\section{Weak value optimization: A geometric interpretation}
In general, a large weak value appears when it is less likely to have successful postselection of the system. It indicates that large weak values are obtained but very rarely. It has been shown that by exploiting quantum resources, e.g., entanglement, squeezed states, etc., the success probability can be improved for a fixed weak value. However, the advantage in attaining considerable success probability with a large weak value is bounded irrespective of any accessible quantum resources. To optimize the advantage, one can either start by fixing the weak value or keeping the success probability fixed. Usually, these are attained from separate optimization protocols. In order to access these advantages simultaneously from a single protocol, we can start with optimizing a combined quantity $\eta$: 
\begin{equation}\label{lo}
     \eta(A_w,p_s)=p_s \vert A_w \vert^2.
\end{equation}
We refer this quantity as the \emph{efficiency} of a weak value metrological protocol. To check how efficient a protocol is, one needs to quantify the gain in terms of $\vert A_w\vert$ along with a cost $\frac{1}{p_s}$ to access the same. Evidently, minimal cost implies higher probability of successful postselection. The efficiency is bounded by the following relation:
\begin{equation}
\label{weakvaluebound}
\eta(A_w,p_s)\leq\vert\vert \hat{A}^2\vert\vert_{\text{op}},
\end{equation}
where the operator norm is defined as $\vert\vert \hat{X} \vert\vert_{\text{op}}:= \sup_{\ket{\phi}\in\mathcal{H}}\{\bra{\phi}\hat{X}\ket{\phi}: \langle\phi\vert\phi\rangle=1\}$. Once the upper bound is known, optimization procedure can be initiated using a proper trial function. To make it more insightful, we would like to explore the geometric connection behind the process of optimization. A detailed analysis will soon reveal how geometric phase appears in the context of efficiency of a metrological experiment. To establish the connection, we start by taking the observable $\hat{A}=\sum_{k}a_k\ket{a_k}\bra{a_k}$. Substituting it into Eq. (\ref{lo}) leads to,
\begin{equation}\label{pp}
     \eta(A_w,p_s) =\Big\vert\sum_{k} a_{k} \vert\langle\psi_{f}\vert a_{k}\rangle\langle a_{k} \vert\psi_{i}\rangle\vert\text{exp}(i  \Phi_{g}^{a_k}(\ket{\psi_i},\ket{a_k},\ket{\psi_f}))\Big\vert^{2},
\end{equation}
where  $\Phi_{\text{g}}^{a_k}(\ket{\psi_i},\ket{a_k},\ket{\psi_f}) := \text{arg}(\langle \psi_f \vert a \rangle \langle a \vert \psi_i \rangle \langle \psi_i \vert \psi_f \rangle)$ is the Bergman angle, widely known as geometric phase \cite{MUKUNDA1993205}.  In the realm of quantum states when cyclic transition occurs starting from a preselected state $\ket{\psi_i}$ to the same state via a path that connects $\ket{a_k}$ and $\ket{\psi_f}$ through geodesic lines on the Bloch sphere, the final state acquires an excess phase over $\ket{\psi_i}$. This phase is proportional to the solid angle at the center, subtended by the geodesic triangle with vertices at $\ket{\psi_i}$, $\ket{\psi_f}$ and $\ket{a_k}$. Now, we aim to optimize the protocol for maximum efficiency. One way to ensure this is to keep the function under summation positive for all $k$. Any exception to this will not lead to the desired optimization. To saturate the bound, the postslected state is taken to be parallel  to $\hat{A}\ket{\psi_{i}}$, i.e.,
\begin{equation}
\ket{\psi_{f}}=\frac{\hat{A}\ket{\psi_{i}}}{\sqrt{\bra{\psi_{i}}\hat{A}^{2}\ket{\psi_{i}}}},
\end{equation}
which is similar to the case considered in \cite{alves2015weak}. Evaluating the expression for Bergman angle, we obtain:
\begin{equation}
\begin{aligned}
    \Phi_{g}^{a_k}(\ket{\psi_i},\ket{a_k},\ket{\psi_f})=
 \text{arg}\Big[a_k\frac{ \vert\langle\psi_i \vert a_k \rangle\vert^2}{\sqrt{\bra{\psi_{i}}\hat{A}^{2}\ket{\psi_{i}}}}\Big]+\text{arg}(\langle\psi_f \vert \psi_i \rangle).
\end{aligned}
\end{equation}
Note that, instead of $a_k$ all the terms inside the first argument function are positive, which simplifies the expression:
\begin{equation}\label{gf}
    \Phi_{g}^{a_k}(\ket{\psi_i},\ket{a_k},\ket{\psi_f})=\text{arg}(\text{sgn}(a_k))+\text{arg}(\langle\psi_f \vert \psi_i \rangle).
\end{equation}
This straightforwardly leads to,
\begin{equation}
   \textup{exp}(i\Phi_{g}^{a_k}(\ket{\psi_i},\ket{a_k},\ket{\psi_f}))=\textup{sgn}(a_k)\textup{exp}(i\phi),
\end{equation}
where $\textup{sgn}(a_k)=\frac{\vert a_k \vert}{a_k}$ is the sign function and $\phi=\text{arg}(\langle\psi_f \vert \psi_i \rangle)$  is a constant phase which does not contribute to the efficiency. The maximum efficiency is attained when $\Phi_{g}^{a_k}$ depends on the eigenvalues $a_k$ via a sign function only. This ensures that the terms under summation in Eq.~(\ref{pp}) is positive for all $k$ leading to maximum efficiency. An example using spin-1/2 system is outlined in the Supplementary Material. It is worth noting that the optimization in terms of efficiency, following the geometric argument, is more general in a sense that it takes into account both the weak values and the postselection probability.

\section{Bounding postselected metrology through weak value optimization}
In the same line, as mentioned in the context of efficiency of a weak value amplification, we can also assign a trade-off relation between the probability of postselection and the information obtained upon postselection. We begin with defining a similar quantity as metrological efficiency here as well.
\begin{definition}
For any arbitrary state preparation and postselection, the efficiency of a postselected metrological protocol $\xi^{ps}$ with postselected Fisher
information $\mathcal{I}_{Q}$ and total probability of successful postselection $p_{\theta}^{ps}$ is defined as,
\begin{equation}
    \xi^{ps}(p_{\theta}^{ps};\mathcal{I}_{Q})=p_{\theta}^{ps}\mathcal{I}_Q.
\end{equation}
\end{definition}  

This efficiency cannot be arbitrarily large, as mentioned in the theorem below.
\begin{theorem}
The efficiency of a protocol
for postselected metrology is bounded according to the following inequality:
\begin{equation}
\label{theorem2}
0\leq \xi^{ps}(p_{\theta}^{ps};\mathcal{I}_{Q})\leq4\vert\vert \hat{A}^2\vert\vert_{\text{op}}
\end{equation}
\end{theorem}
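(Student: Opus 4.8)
The plan is to work directly from the operator expression for the postselected quantum Fisher information in Eq.~(\ref{refq}) and exploit the fact that multiplying by $p_{\theta}^{ps}$ removes exactly one inverse power of the postselection probability from each term. Writing $\hat{\rho}_{\theta}=\hat{U}(\theta)\hat{\rho}_{0}\hat{U}(\theta)^{\dagger}$, the efficiency takes the form
\begin{equation}
\xi^{ps}(p_{\theta}^{ps};\mathcal{I}_{Q})=p_{\theta}^{ps}\mathcal{I}_{Q}=4\,\mathrm{Tr}\big(\hat{F}\hat{A}\hat{\rho}_{\theta}\hat{A}\big)-\frac{4}{p_{\theta}^{ps}}\big\vert\mathrm{Tr}\big(\hat{F}\hat{\rho}_{\theta}\hat{A}\big)\big\vert^{2}.
\end{equation}
The lower bound is immediate: $\xi^{ps}$ is a product of a probability $p_{\theta}^{ps}\geq0$ and the Fisher information $\mathcal{I}_{Q}\geq0$, so it is manifestly nonnegative. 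Equivalently, Eq.~(\ref{Fisheq26}) already exhibits $\xi^{ps}$ as a nonnegative-weighted sum of the squared weak-value differences $\vert A_{w}^{i}-A_{w}^{j}\vert^{2}$.

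For the upper bound, the first observation is that the subtracted term above is a modulus squared divided by a positive probability, hence nonnegative; discarding it can only increase the expression, giving $\xi^{ps}\leq4\,\mathrm{Tr}\big(\hat{A}\hat{F}\hat{A}\,\hat{\rho}_{\theta}\big)$ after using cyclicity of the trace. The key step is then a single operator inequality. Since $\hat{F}$ is a projector, $\hat{1}-\hat{F}\geq0$, and because $\hat{A}$ is Hermitian one has, for every $\ket{\phi}$,
\begin{equation}
\bra{\phi}\hat{A}(\hat{1}-\hat{F})\hat{A}\ket{\phi}=\langle\hat{A}\phi\vert(\hat{1}-\hat{F})\vert\hat{A}\phi\rangle\geq0,
\end{equation}
so that $\hat{A}\hat{F}\hat{A}\leq\hat{A}^{2}$ as operators. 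Tracing this ordering against $\hat{\rho}_{\theta}\geq0$ preserves the inequality, and a positive operator obeys $\mathrm{Tr}(\hat{A}^{2}\hat{\rho}_{\theta})\leq\vert\vert\hat{A}^{2}\vert\vert_{\text{op}}$ because $\mathrm{Tr}(\hat{\rho}_{\theta})=1$. Chaining these yields $\xi^{ps}\leq4\vert\vert\hat{A}^{2}\vert\vert_{\text{op}}$.

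I do not anticipate a genuine obstacle here: the argument is structurally a variance-type bound, and the only care required is to phrase it operator-theoretically (via $\hat{F}\leq\hat{1}$ and the definition of $\vert\vert\cdot\vert\vert_{\text{op}}$ as a supremum over normalized states) so that it covers arbitrary, possibly mixed, preparations rather than just the pure-state weak-value parametrization of Eq.~(\ref{Fisheq26}). The mild subtlety worth checking is the bookkeeping of the factor of $4$ and the powers of $p_{\theta}^{ps}$: it is precisely the multiplication by $p_{\theta}^{ps}$ that converts the $1/(p_{\theta}^{ps})^{2}$ in $\mathcal{I}_{Q}$ into a harmless $1/p_{\theta}^{ps}$ multiplying a nonnegative quantity, which is what lets the discarded term be dropped and the clean projector bound $\hat{A}\hat{F}\hat{A}\leq\hat{A}^{2}$ close the estimate.
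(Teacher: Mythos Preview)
Your argument is correct and structurally identical to the paper's: both discard the nonnegative second term and bound the first by $\|\hat{A}^{2}\|_{\text{op}}$. The only difference is packaging. The paper rewrites the first term through the weak-value decomposition $\sum_{f_k\in\mathcal{F}^{ps}}|\langle\psi_\theta|\hat{A}|f_k\rangle|^2$ and then invokes Bessel's inequality, whereas you stay at the operator level and use $\hat{A}\hat{F}\hat{A}\leq\hat{A}^{2}$ followed by $\mathrm{Tr}(\hat{A}^{2}\hat{\rho}_\theta)\leq\|\hat{A}^{2}\|_{\text{op}}$; these are literally the same inequality in different notation. Your route is a bit cleaner in that it never needs to introduce the intrinsic weak values $A_w^k$ or the KD terms, while the paper's version has the merit of making the saturation conditions (vanishing weak values on the failed postselections, $\langle\psi_i|\hat{A}|\psi_i\rangle=0$) immediately readable from the decomposition.

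One caveat on your closing remark: the claim that the operator phrasing ``covers arbitrary, possibly mixed, preparations'' is an overreach as stated. Eq.~(\ref{refq}) is derived in the paper only for pure $\hat{\rho}_\theta$; for mixed states the postselected QFI is \emph{not} given by that trace formula, so your chain of inequalities does not directly apply. The extension to mixed preparations goes instead through the convexity argument the paper sketches (``the Fisher information decreases when the states are mixed''), after which the pure-state bound suffices.
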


\begin{proof}
Since the Fisher information decreases when the states are mixed, we can expect that the maximum efficiency will be achieved for pure states only. Here, we start with preparing the system in a pure initial state $\ket{\psi_i}$ which evolves to $\ket{\psi_\theta}$ after the parameter $\theta$ is encoded via unitary $\text{exp}(-i\hat{A}\theta)$. Then, the efficiency of the protocol can be expressed in terms of standard KD distribution and efficiency in weak value amplification,
\begin{equation}
    \xi^{ps}(p_{\theta}^{ps};\mathcal{I}_{Q})=4\sum_{{f_k}\in \mathcal{F}^{ps} } \eta(p_{\theta k}^{ps}, A_w^k)-\frac{4}{p_{\theta}^{ps}}\Big\vert\sum_{m}\sum_{{f_k}\in \mathcal{F}^{ps} }
    a_m q^{\ket{\psi_\theta}}_{a_m,f_k}\Big\vert^2,
    \label{alterEf}
\end{equation}
where $q^{\ket{\psi_\theta}}_{a_m,f_k}:= \langle \psi_\theta \vert a_m \rangle \langle a_m\vert f_k \rangle \langle f_k \vert \psi_\theta \rangle$ is the standard KD distribution (see Supplementary Material for derivation).
To maximize the efficiency we shall treat the first and second term in RHS of this expression independently.  Applying Bessel's inequality for the first term, we obtain 
\begin{equation}
\label{fisheq20}
    \begin{split}
        \sum_{{f_k}\in \mathcal{F}^{ps} } \eta(p_{\theta k}^{ps}, A_w^k) = \sum_{{f_k}\in \mathcal{F}^{ps} } \vert \langle \psi_\theta \vert \hat{A}\vert f_k \rangle\vert^2 \leq \vert\vert  \hat{A}^2 \vert\vert_\text{op} .
    \end{split}
\end{equation}
Since the second term is a non-negative quantity, the inequality (\ref{theorem2}) is always respected.

\end{proof}
Now we investigate the condition to achieve optimal efficiency.
The inequality (\ref{fisheq20}) saturates under the following condition,
\begin{equation}
    \sum_{{f_k}\notin \mathcal{F}^{ps} } \eta(p_{\theta k}^{ps}, A_w^k) = 0 .
\end{equation}
Since all the terms inside the summation sign is positive, the condition is equivalent with $\forall f_k \notin \mathcal{F}^{ps}, \ \langle \psi_\theta \vert \hat{A}\vert f_k\rangle =0$. This implies that all the intrinsic weak values with failed postselected states have to be zero separately. The condition for the second term in RHS of Eq.~(\ref{alterEf}) to be zero is  $\sum_{m}\sum_{{f_k}\in \mathcal{F}^{ps} } a_m q^{\ket{\psi_\theta}}_{a_m,f_k}=0$. Putting together both the conditions, we obtain 
    \begin{equation}
        \begin{split}
            \langle \psi_i \vert \hat{A} \vert \psi_i \rangle = 0 \; \; \text{and} \; \; \forall f_k \notin \mathcal{F}^{ps}, \ \ A_w^k =0 .
        \end{split}
    \end{equation}
Under this condition, the inequality (\ref{theorem2}) saturates, which is termed as information preserving postselected metrology.
    
Now, we point out a situation where we achieve the bound for a rank-2 postselection. Then, Eq.~(\ref{Fisheq26}) reduces to $\mathcal{I}_Q(\theta\vert\psi_{\theta}^{ps})(p_{\theta}^{ps})^{2}=4 p_{\theta 1}^{ps}p_{\theta 2}^{ps}\Big\vert A_{w}^{1}-A_{w}^{2}\Big\vert^{2}$. First we set $A_{w}^{2}=0$. Under this condition, we will get:
\begin{equation}
\xi^{ps}(p_{\theta}^{ps};\mathcal{I}_{Q})=4\frac{p_{\theta 1}^{ps}p_{\theta 2}^{ps}}{p_{\theta 1}^{ps}+p_{\theta 2}^{ps}}\vert A_w^1\vert^{2}.
\end{equation}
 Imposing the weak value optimization (\ref{weakvaluebound}) on first postselection leads to,
\begin{equation}
\xi^{ps}(p_{\theta}^{ps};\mathcal{I}_{Q})=4\frac{p_{\theta 2}^{ps}}{p_{\theta 1}^{ps}+p_{\theta 2}^{ps}}\vert\vert \hat{A}^{2}\vert\vert_{\text{op}}.
\end{equation}
Once $p_{\theta 1}^{ps}$ starts to tend towards zero, one can get closer to the saturation. Thus,  we need to maintain: $\vert f_{2}\rangle$ perpendicular to $\hat{A}\vert\psi_{\theta}\rangle$ and $\vert\psi_{\theta}\rangle$ parallel to $\hat{A}\vert f_{1}\rangle$, simultaneously. Also, we need to choose the initial state of the system in such a way that the average of $\hat{A}^2$ saturates under the optimality conditions. It is now evident from the geometric interpretation that the optimization for the first intrinsic weak value would be attained when the geometric phases between the preselection, the eigenstates of $\hat{A}$ and the postselection have certain dependence with the eigenvalues of $\hat{A}$.  Below, give an example to show how to achieve optimal quantum enhancement in the context of postselected metrology. 

\subsection{Information preserving postselected metrology using three-level quantum system}

\begin{figure*}[htb!]
	\minipage{0.28\textwidth}%
	\includegraphics[width=\linewidth,height=4.5cm]{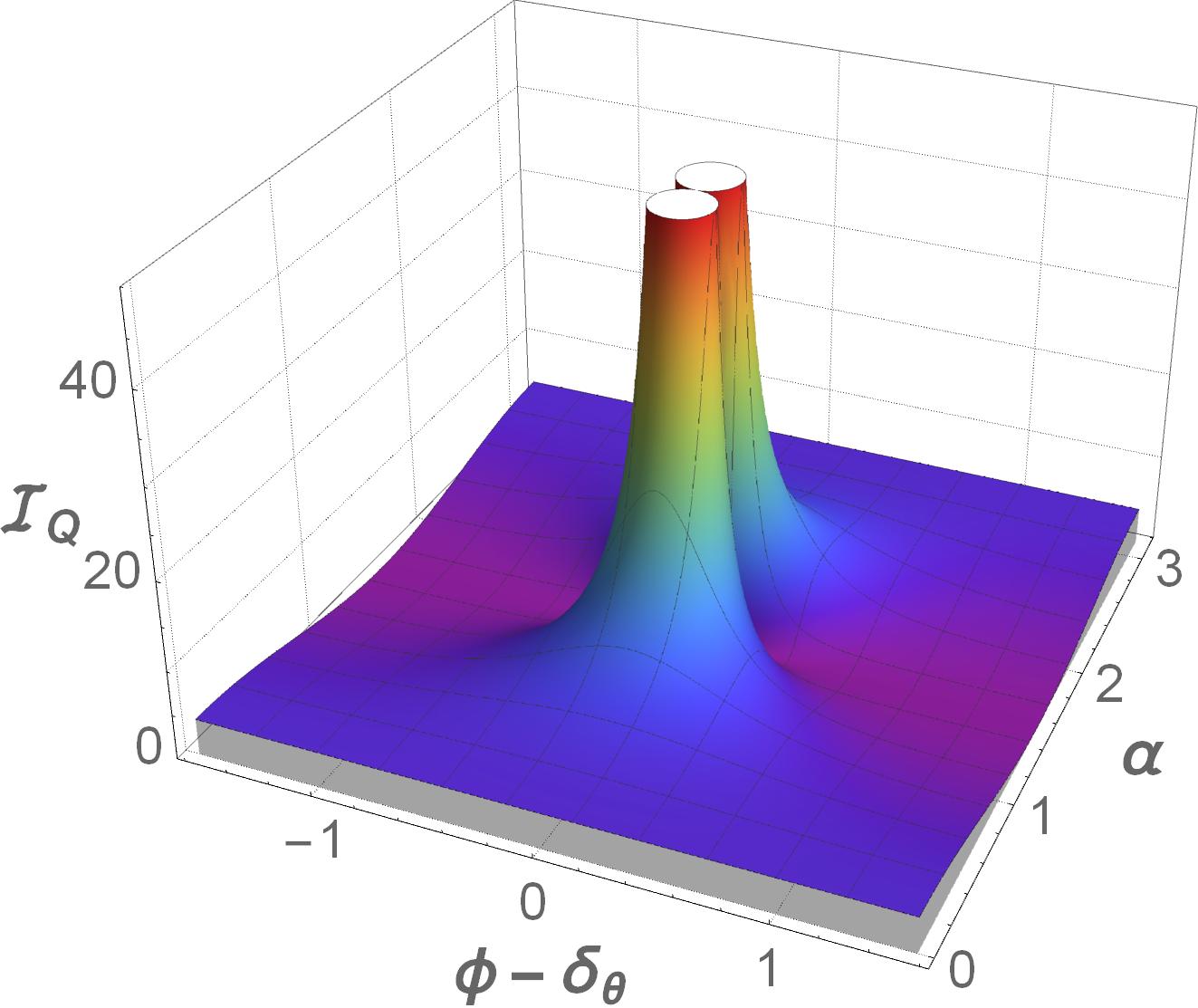}
	\endminipage\hfill
	\minipage{0.28\textwidth}
	\includegraphics[width=\linewidth]{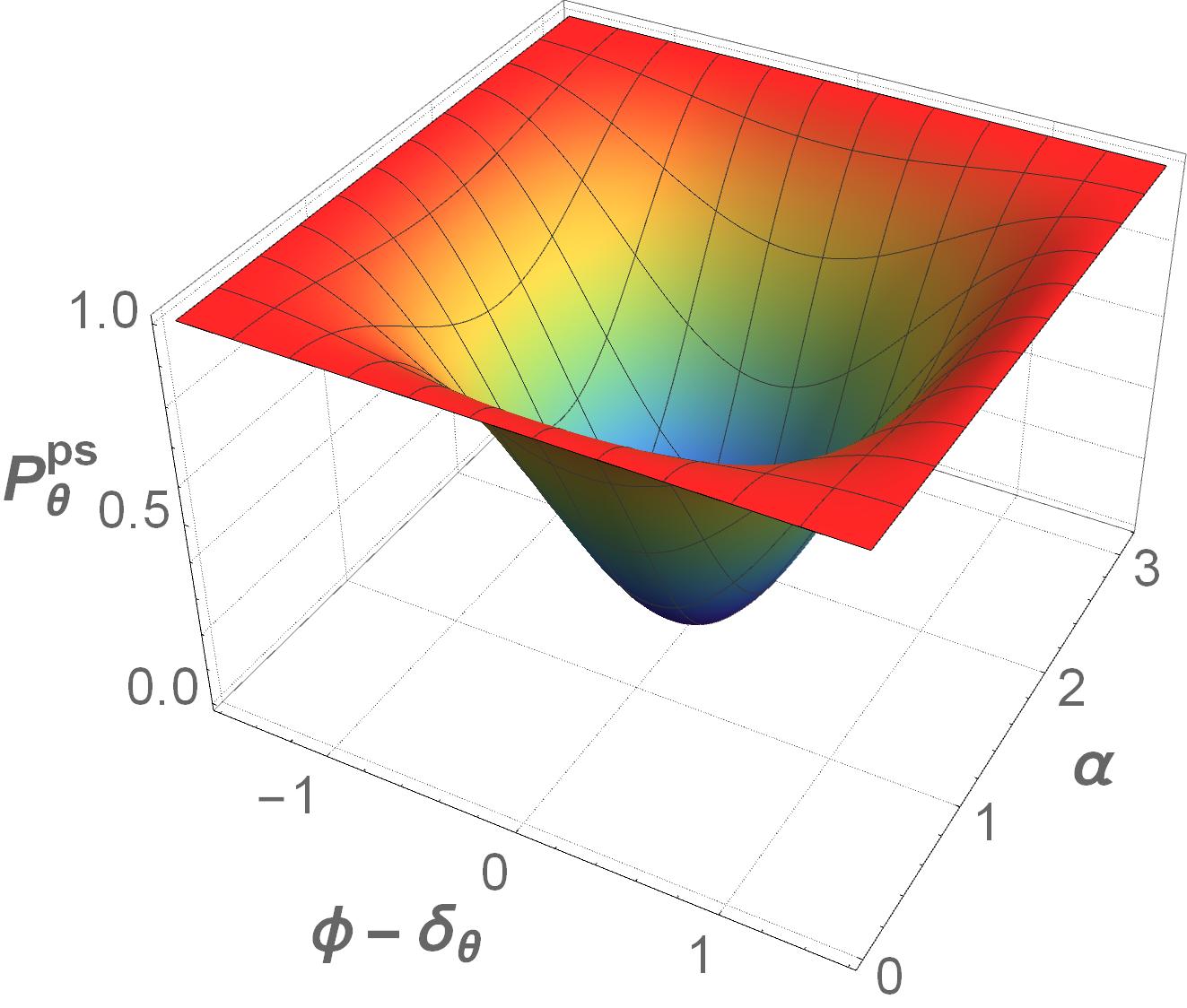}
	\endminipage\hfill
	\minipage{0.28\textwidth}
	\includegraphics[width=\linewidth]{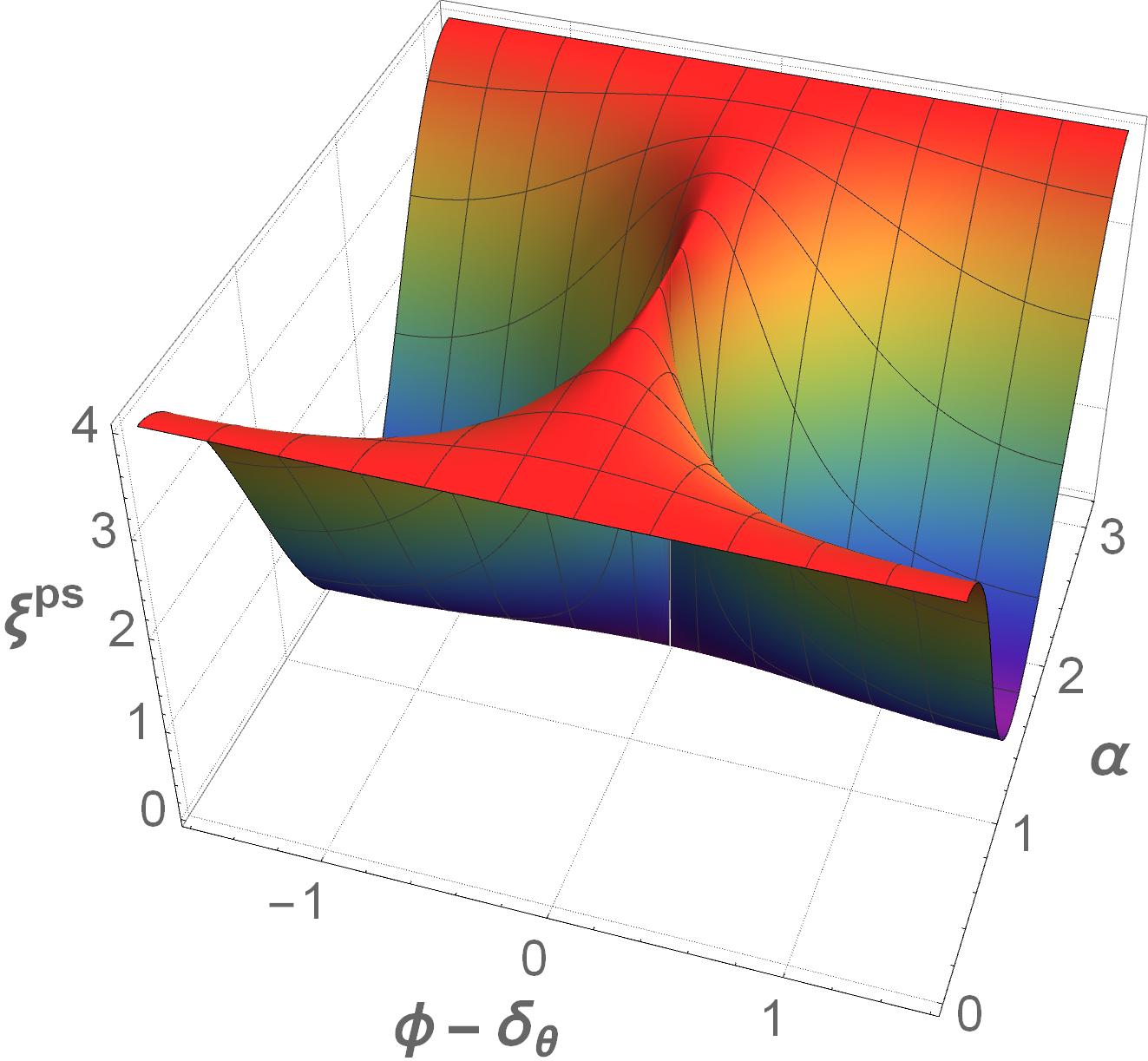}
	\endminipage
	\caption{\textbf{Information preserving postselected metrology.} Figures from the left represent the postselecetd Fisher information $\mathcal{I}_Q$, probability of successful postselection $p_{\theta}^{ps}$, and efficiency of the protocol respectively with different values of $\phi-\delta_\theta$ and $\alpha$, for $\lambda=1$. Optimatility condition is attained when  $\phi \rightarrow \delta_\theta$. For more details, see text.}    
	\label{Fisherinfofig}
\end{figure*}

Consider the state space of the system is spanned by the basis set $\{\ket\lambda,\ket{\tilde\lambda}$, $\ket{-\lambda}\}$, which are the eigenvectors of the system observable $\hat{A}$ with corresponding  eigenvalues $\lambda$, $\tilde{\lambda}$ and $-\lambda$ respectively. We want to estimate $\theta$ that lies close to its true value $\theta_0$ and the difference between them is $\delta_\theta \equiv \theta - \theta_0$, with $\vert\delta_{\theta}\vert\ll1$. First, we choose the postselection $\hat{F}=\ket {f_{1}}\bra {f_{1}}+\ket {f_{2}}\bra {f_{2}}$, with
\begin{equation}
\vert f_{1}\rangle=\frac{\ket{\lambda}-\ket{-\lambda}}{\sqrt{2}},
\end{equation}
\begin{equation}
\vert f_{2}\rangle=\frac{\text{cos}\alpha\vert\lambda\rangle+\text{cos}\alpha\ket{-\lambda}}{\sqrt{2}}+\text{sin}\alpha\ket{\tilde\lambda},
\end{equation}
where $\alpha$ is a real parameter. We also choose the initial state, 
\begin{equation}
    \ket{\psi_{i}}=\frac{U^{\dagger}(\theta_0)\Big[(\text{cos}\phi+i\text{sin}\phi)\ket{\lambda}+(\text{cos}\phi-i\text{sin}\phi)\ket{-\lambda}\Big]}{\sqrt{2}},
\end{equation}
where the unitary $\hat{U}(\theta_0) = \text{exp}(-i \hat{A} \theta_0)$. 
 In this setting the evolved state becomes,
\begin{equation}
\begin{aligned}
\ket{\psi_{\theta}}=\hat{U}(\theta)\ket{\psi_{i}}=\frac{1}{\sqrt{2}}\Big[(\text{cos}\phi+i\text{sin}\phi)e^{-i\delta_{\theta}\lambda}\ket{\lambda}+\\
~(\text{cos}\phi-i\text{sin}\phi)e^{i\delta_{\theta}\lambda}\ket{-\lambda}\Big],
\end{aligned}
\end{equation}
where $\phi$ is a real parameter. Putting this postselections and the initial state into Eq.~(\ref{Fisheq26}), we find
\begin{equation}
\mathcal{I}_Q(\theta\vert\psi_{\theta}^{ps})p_{\theta}^{ps}=\frac{4\lambda^2\text{cos}^2\alpha}{\text{sin}^2(\phi-\lambda\delta_{\theta})+\text{cos}^2\alpha\text{cos}^2(\phi-\lambda\delta_{\theta})}.
\end{equation}
The total postselection probability comes out of this estimation:
\begin{equation}
        p_{\theta}^{ps}=p_{\theta 1}^{ps}+p_{\theta 2}^{ps}
        ~=\text{sin}^2(\phi-\lambda\delta_{\theta})+\text{cos}^2\alpha\text{cos}^2(\phi-\lambda\delta_{\theta}),
\end{equation}
along with the respective weak values
\begin{equation}
    A_{w}^{1}=\frac{-i\lambda\text{cos}(\phi-\lambda\delta_{\theta})}{\text{sin}(\phi-\lambda\delta_{\theta})},   ~~A_{w}^{2}=\frac{i\lambda\text{sin}(\phi-\lambda\delta_{\theta})}{\text{cos}(\phi-\lambda\delta_{\theta})}.
\end{equation}

As $\phi$ approaches $\lambda\delta_\theta$,  the condition for optimal quantum advantages gets
 satisfied. This can clearly be seen from the following expressions when compiling with appropriate limits,
\begin{align}
\lim_{  \phi \rightarrow \lambda\delta_{\theta} }   p^{ps}_{\theta}  & = \text{cos}^2\alpha  , \\
\lim_{  \phi \rightarrow \lambda\delta_{\theta} }  \mathcal{I}_Q(\theta \vert \psi_{\theta}^{ps})& = 4\lambda^2\text{sec}^2\alpha  , \; \; \textrm{and} \label{Eq:ArbFishPS} 
 \\
\lim_{  \phi \rightarrow \lambda\delta_{\theta} } \xi (p^{ps}_{\theta}; \mathcal{I}_Q) & = 4\lambda^2. 
\end{align}
Thus, in the aforementioned limit, we achieve the information preserving protocol associated with the postselected metrology. The Fig.~\ref{Fisherinfofig} reflects how the Fisher information, probability of postselection and the efficiency of the protocol are dependent on the change of parameters $\{\phi -\lambda\delta_{\theta}, \alpha\}$. The parameter $\alpha$ opens up a choice for the experimenter to decide the degree of precision without violating the restriction of information preservation. Before moving to further discussion, we should note that tuning $\phi$ in the limit $\phi \rightarrow \lambda\delta_{\theta}$ would not be an easy task. We must have a pre-estimated error range of the parameter $\theta$ beforehand; let us call it $\Delta_\theta$.  The experimenter then needs to perform a trial run tuning $\phi$ with different values within the range $\vert \phi\vert \leq \lambda \Delta_\theta$. When $\Delta_\theta$ is sufficiently small, reaching the optimal limit becomes easier. Even if the tuning is not perfect for reaching the optimal point, it gives us a range of different values of $\phi$ where we may experience anomalous QFI (see Fig.~\ref{Fisherinfofig}). To widen this range, one can consider increasing the value of $\lambda$.

In comparison to the set up \cite{arvidsson2020quantum}, our protocol relaxes some experimental restrictions yet achieves anomalous QFI without any loss of information. Additionally, we can experience this anomalousness in this setup even without setting $\delta_\theta \rightarrow 0$ condition.  Moreover, all these advantages can be obtained only by considering a three-level system. Thus, by construction, this setup is more profitable to attain the desired results.\par

The preparation and final measurement in an actual experiment have costs. Now the question arises, when the costs corresponding to each measurement set up are included, whether it still be possible to attain the improved rate of Fisher information. The answer is yes. Apart from the cost for preparation, if our experimental conditions the execution of the final measurement on successful postselection of a fraction $p_{\theta}^{ps}$ of the states, we need to include a cost for postselection. As defined in \cite{arvidsson2020quantum}, the postselected experiment’s information-cost rate is $R^{ps}(\theta):=p_{\theta}^{ps}\mathcal{I}^{ps}(\theta)/(C_P+p_{\theta}^{ps} C_M+C_{ps})$, where $\mathcal{I}^{\text{ps}}(\theta)$ is the Fisher information conditioned on successful postselection. So far, we have shown that $\mathcal{I}_{Q}(\theta\vert\psi_{\theta}^{ps})$ can exceed $4\lambda^2$. But how large can $\mathcal{I}_{Q}(\theta\vert\psi_{\theta}^{ps})$ grow?  For that, we
construct a properly conditioned metrological procedure using only a three-level non-degenerate quantum system. If $C_P$ and $C_{ps}$ are negligible compared to $C_M$, then $R^{ps}(\theta)$ can grow without any theoretical bound. In general, when $\mathcal{I}_{Q}(\theta\vert\psi_{\theta}^{ps})\rightarrow\infty$, $\xi(p_{\theta}^{ps}, \mathcal{I_Q})<4\lambda^2$, information is lost in the events discarded by postselection. We derived the proper condition to ensure no information is stored in the discarded event. These conditions in terms of the intrinsic weak values allowed us to construct this example to achieve this optimum.

\subsection{Role of quasiprobalility in postselected metrolgy}
The reason that is pointed out for having anomalously large QFI in postselected metrology is the negativity of KD distribution \cite{arvidsson2020quantum}. On the contrary, we claim that anomalously large QFI can be achieved with positive KD distribution. 

We aim to prove this claim by the same example mentioned earlier. The KD distribution corresponding to the system state is positive everywhere when the optimality conditions are imposed; see Table~\ref{quasitable1}. However, under these conditions, we obtain anomalously large QFI. 
\begin{center}
    \begin{table}[ht]
        \centering
        \begin{tabular}{c|ccc}
\textrm{$\lim_{  \phi \rightarrow \lambda\delta_{\theta} }$} {$ q^{\hat{\rho}_\theta}_{a_m, f_k}$} &
\textrm{$\ket{f_1}$} &
\textrm{$\ket{f_2}$}&
\textrm{$\ket{f_3}$}\\
\hline \hline 
$\ket{\lambda}$  & $0$ & $\frac{\text{cos}^2\alpha}{2}$ & $\frac{\text{sin}^2\alpha}{2}$ \\
\\
$\ket{\tilde\lambda}$  & $0$ & $0$ & $0$ \\
 \\
$\ket{-\lambda}$  & $0$ & $\frac{\text{cos}^2\alpha}{2}$ & $\frac{\text{sin}^2\alpha}{2}$

\end{tabular}
        \caption{KD distribution corresponding to the system state is tabulated. Under optimality condition, the distribution is positive for all values of $\alpha$. Here, $\{\ket{\lambda},\ket{\tilde{\lambda}},\ket{-\lambda}\}$ are the eigenvectors of system observable $\hat{A}$ and $\{\ket{f_1},\ket{f_2},\ket{f_3}\}$ is the complete set of postselection bases.} 
        \label{quasitable1}
    \end{table}
\end{center}

It can also be shown that the doubly extended KD distribution for this case is also positive (see Supplementary Material). Here, we propose an alternative approach for pure states. We start by referring  an identity that addresses the relation between KD distribution and the classical joint probability distribution for two successive projective measurements  \cite{PhysRevA.76.012119}: 
\begin{equation}
     q^{\hat{\rho}}_{a_m, f_k} =
    \mathrm{Tr}(\hat{\rho} \hat{A}_m \hat{F}_k \hat{A}_m) +
    \frac{1}{2} \Big[\mathrm{Tr} ( (  \hat{\rho} - \hat{\rho}'
     ) \hat{F}_k) +\mathrm{Tr} ((  \hat{\rho} - \hat{\rho}'
     ) \hat{F}_k^{\frac{\pi}{2}})\Big],
    \label{mq}
 \end{equation}
 where $\hat{F}_k^{\frac{\pi}{2}}=\text{exp}(-i\frac{\pi}{2}\hat{A}_m)\hat{F}_k\text{exp}(i\frac{\pi}{2}\hat{A}_m)$ and $\hat{\rho}'$ is the post-measurement state achieved after a non-selective projective measurement on $\hat{\rho}$ with $\hat{A}_m$ and $(\hat{1}-\hat{A}_m)$. $\{\hat{A}_{m}\}$ and $\{\hat{F}_k\}$ are two different orthonormal sets of projectors. Here, the KD distribution is composed of two terms. First term in RHS is called as Wigner formula, which refers the classical joint probability distribution for a set of outcomes $a_m$ and $f_k$ upon a successive projective measurement on $\hat{\rho}$ first with $\hat{A}_m$ and then with $\hat{F_k}$. The second and third terms together can be considered as quantum modification terms which arise due to the the non-commutativity of measurement observables. Whenever the observables commute, this part vanishes. For simplicity, we denote the Wigner formula as $Q^{\hat{\rho}}_{a_m,f_k} \; := \; \mathrm{Tr}(\hat{\rho} \hat{A}_m \hat{F}_k \hat{A}_m)$.

\begin{theorem}
For pure states, the metrological efficiency $ \xi^{ps}(p_{\theta}^{ps};\mathcal{I}_{Q})=0$, when its KD distribution equals to the corresponding Wigner formula,
\begin{equation}
q^{\hat{\rho}_\theta}_{a_m, f_k}= Q^{\hat{\rho}_\theta}_{a_m,f_k}
\end{equation}
\end{theorem}
\begin{proof}
Upon solving to satisfy the above condition, one can straightforwardly obtain two conditions:
\begin{equation}
\label{eq47}
        \langle\psi_{\theta}\vert a _m\rangle\langle a_{m}\vert f_{k}\rangle=0 
\end{equation}
\hspace{4cm} or,
\begin{equation}
\label{eq48}
        \langle a_m\vert\psi_{\theta}\rangle\langle f_k\vert a_m\rangle=\langle f_k\vert\psi_{\theta}\rangle.
\end{equation}
The condition (\ref{eq47}) refers to the points where the KD distribution and the Wigner formula both take zero values. To keep the triviality aside, we neglect that and consider the condition (\ref{eq48}).
 Recall Eq.~(\ref{alterEf}),
\begin{equation}
     \xi^{ps}(p_{\theta}^{ps};\mathcal{I}_{Q})=4\sum_{{f_k}\in \mathcal{F}^{ps} } p_{\theta k}^{ps}\vert A_w^k\vert^2-\frac{4}{p_{\theta}^{ps}}\Big\vert\sum_{m}\sum_{{f_k}\in \mathcal{F}^{ps} }
    a_m q^{\ket{\psi_\theta}}_{a_m,f_k}\Big\vert^2.
    \label{vv}
\end{equation}
Decomposing the terms under summation and imposing the condition (\ref{eq48}) lead to:
\begin{align*} 
\sum_{f\in \mathcal{F}^{ps} } p_{\theta k}^{ps}\vert A_k\vert^2=p_{\theta}^{ps} \Big\vert\sum_{m}a_m\Big\vert^2,\\ 
\sum_{m}\sum_{{f_k}\in \mathcal{F}^{ps} }
    a_m q^{\ket{\psi_\theta}}_{a_m,f_k}=  (p_{\theta}^{ps})^2 \Big\vert\sum_{m}a_m\Big\vert^2.
\end{align*}
 Replacing this to Eq.~(\ref{vv}) yields:
\begin{equation}
\xi^{ps}(p_{\theta}^{ps};\mathcal{I}_{Q})=0 . 
\end{equation}
See Supplementary Material for detailed calculation.
\end{proof}

We have emphasized this point earlier that KD distribution and the Wigner formula are not equal when the measurement observables do not commute. However, the KD distribution may still be positive in such cases. The example here also exhibits a similar feature (see Table~\ref{quasitable2}), where the Wigner formula differs from KD distribution. This observation provides a newer insight which indicates that the standard notion of KD nonclassicality \footnote{Standard notion of KD nonclassicality refers negative or non-real entries in KD distribution.} and the quantum advantage in postselected metrology may not have a direct connection.
Rather the quantum advantage may have links with the nonvanishing quantum modification terms in KD distribution.

\begin{table}[ht]
\centering
\begin{tabular}{c|ccc}
\textrm{$\lim_{  \phi \rightarrow \lambda\delta_{\theta} }$} {$ Q^{\hat{\rho}_\theta}_{a_m, f_k}$} &
\textrm{$\ket{f_1}$} &
\textrm{$\ket{f_2}$}&
\textrm{$\ket{f_3}$}\\
\hline \hline 
$\ket{\lambda}$  & $\frac{1}{4}$ & $\frac{\text{cos}^2\alpha}{4}$ & $\frac{\text{sin}^2\alpha}{4}$ \\
\\
$\ket{\tilde\lambda}$  & $0$ & $0$ & $0$ \\
 \\
$\ket{-\lambda}$  & $\frac{1}{4}$ & $\frac{\text{cos}^2\alpha}{4}$ & $\frac{\text{sin}^2\alpha}{4}$
\end{tabular}
\caption{Wigner formula corresponding to the system state.}
\label{quasitable2}
\end{table}

Thus, we have shown that the quantum advantage in postselected metrology does not appropriately reciprocate with the standard KD nonclassicality, which is often connected with negative or non-real values in the distribution. Contrary to this, we have shown there exist states with positive KD distribution, which can lead the experimenter to have QFI beyond HL in postselected metrology. Our results indicate that the quantum modification terms in the identity  Eq.~(\ref{mq})  effectively contribute to the efficiency in postselected metrology, which in consequence shows anomalous QFI.

\section{Conclusion}
We have studied postselected quantum metrology that has been shown to have better precision over other metrological protocols. It can even yield infinite precision, however, with very low probability. In a realistic situation, one has to consider both precision as well as probability. So far, an optimization protocol, which genuinely signifies the quantum advantages considering both of these quantities,  was missing. 


We have shown that the accessible advantage is bounded irrespective of the quantum resources utilized in postselected metrology. Our results highlight the significance of weak values to study the quantum advantage. In general, the weak values and the geometric phases are interrelated \cite{Sj_qvist_2006, Tamate_2009}. We have provided a unique way to examine the optimality conditions by analyzing the geometric phases associated with the states under consideration. Translating this scheme to postselected metrology has found an essential geometric argument in finding the optimal way to engineer the postselection. 

Generally, the postselected states carrying anomalously large quantum Fisher information, i.e., high precision, is attributed to negative entries in the corresponding Kirkwood-Dirac quasiprobability distribution. On the contrary, we have shown that this negativity is not necessarily associated with the increase in Fisher information. For that, we have proposed a preparation and postselection procedure using a three-level non-degenerate quantum system. 

In summary, our work brings a better understanding of postselected metrology and provides a fundamental bound of quantum advantages irrespective of the quantum resources utilized. We provide a metrological protocol that can harness the maximum possible quantum advantage. Our results lay a foundation for further exploration of multiparameter postselected metrology and possible quantum advantages. Our study indicates that the negativity of Kirkwood-Dirac distribution does not correspond to nonclassicality always. Thus, it is expected to initiate an effort for a complete characterization of nonclassicality on the theoretical ground.

{\bf Acknowledgments} --  M.N.B. gratefully acknowledges financial supports from SERB-DST (CRG/2019/002199), Government of India.

{\bf Author contributions} -- All the authors have contributed
to this work equally.

{\bf Competing interests} -- The authors declare no competing
interests.

{\bf Data and materials availability} -- Data sharing not applicable
to this article, as no datasets were generated or analyzed
during the current study.

\bibliography{apssamp1}

\newpage

\onecolumngrid

\newpage
\clearpage
\appendix

\section{Weak value amplification}
In standard weak value amplification (WVA) protocol, an experimenter prepares the meter and the system in some pure initial states $\vert\phi\rangle$
and $\vert\psi_{i}\rangle$ respectively. Then they are coupled weakly using the interaction Hamiltonian:  $\hat{H}_{\text{int}}=\hbar g\hat{A}\otimes\hat{F}\delta(t-t_{0})$, where $\hat{F}$ and $\hat{A}$ are observables corresponding to the meter and system respectively, and $g$ is a small parameter signifying the coupling strength between the system and meter. The function $\delta(t-t_0)$ indicates that the interaction between the system and the meter is impulsive. Finally, the postselection of the system is done onto a pure final state $\vert\psi_{f}\rangle$, discarding all the other events where the postselection fails. This procedure effectively prepares an updated meter state that includes the effect of the system $\vert\phi'\rangle=\hat{M}\vert\phi\rangle/\vert\vert\hat{M}\vert\phi\rangle\vert\vert$, which is mentioned in terms of a Kraus operator $\hat{M}=\langle\psi_{f}\vert\exp(-i g\hat{A}\otimes\hat{F})\vert\psi_{i}\rangle$.
Averaging a meter observable $\hat{R}$ using this updated meter state
yields $\langle\hat{R}\rangle_{\vert\phi^\prime\rangle}=\langle\phi \vert\hat{M}^{\dagger}\hat{R}\hat{M}\vert\phi\rangle/\langle\phi \vert\hat{M}^{\dagger}\hat{M}\vert\phi\rangle$. The observable average is well approximated up to first order in~$g$ \cite{kofman2012nonperturbative,di2012full}:
\begin{equation}
\langle\hat{R}\rangle_{\vert\phi'\rangle}\approx2g\left[\text{Re}A_{w}\,\text{Im}\alpha+\text{Im}A_{w}\,\text{Re}\alpha\right],\label{eq:linear}
\end{equation}
where $\alpha=\langle\hat{R}\hat{F}\rangle_{\vert\phi\rangle}$ is the correlation
parameter that can be fixed by the choice of meter observables and the initial meter
state $\vert\phi\rangle$, and $A_{w}=\langle\psi_{f}\vert\hat{A}\vert\psi_{i}\rangle/\langle\psi_{f}\vert\psi_{i}\rangle$ is a complex \emph{weak value} controlled by the system.  This relation shows how a large weak value can enhance the sensitivity of the meter even with small $g$. In case of estimating $g$, the weak value has notable ability to amplify its effect in the meter. 

\section{Generalized protocol for optimization} 
All the optimization discussed so far can be generalized in terms of efficiency.
At the point of optimization, the weak value and the postselection probability turns out to be 
\begin{equation}\label{gg}
  p_{s}=\frac{\langle\psi_{i}\vert\hat{A}\vert\psi_{i}\rangle^{2}}{\langle\psi_{i}\vert \hat{A}^{2} \vert\psi_{i}\rangle}, ~~A_{w}=\frac{\langle\psi_{i}\vert \hat{A}^{2}\vert\psi_{i}\rangle}{\langle\psi_{i}\vert \hat{A}\vert\psi_{i}\rangle}.
\end{equation}
Interesting to note that only the real part of the weak value will contribute when the point of saturation is reached. At this point, anomalously large weak values can be obtained when the observable average of $\hat{A}$ tends to zero.  It has been shown that for a fixed weak value, the probability of successful postselection can be maximized by taking the final state parallel to  $(\hat{A}-A
_w)\ket{\psi_i}$ \cite{PhysRevLett.113.030401}. Alternatively, one can also conduct optimization to maximize the weak value for a fixed postselection probability. To accomplish such enhancements, one needs to use different optimization protocols. If we optimize the efficiency $\eta(A_w,p_s)$, it is still possible to attain the optimized values of the aforementioned protocols as special cases. Here, we demonstrate this conjunction of two different protocols with an example. 

We consider coupling $n$ entangled systems to the meter simultaneously where the system qubits are prepared in an entangled state \cite{giovannetti2011nat},
\begin{equation}
\vert\psi_{i}\rangle=\frac{1}{\sqrt{2}}(\vert\lambda_x\rangle^{\otimes n}+\vert\lambda_y\rangle^{\otimes n}).
\end{equation}
Here, $\lambda_x$ and $\lambda_y$ are two arbitrary eigenvalues of an operator $\hat{a}$ whose eigenvectors spans a subsystem. The rest of the subsystems are different copies of the same. From this preparation we construct the overall system observable as follows,
\begin{equation}
    \hat{A} = \sum_{k=1}^{n} \mathbf{I}^{\otimes k-1}\otimes \hat{a}\otimes \mathbf{I}^{\otimes n-k},
\end{equation}
where the operators in the tensor product act sequentially with different subsystems. The weak value, following Eq. (\ref{gg}), is
\begin{equation}
  A_{w}=\frac{n(\lambda_x^{2}+\lambda_y^{2})}{(\lambda_x+\lambda_y)} .
\end{equation}
Under the condition for fixed weak value, we simply obtain
\begin{equation}
    \lambda_y=\frac{A_{w}\pm\sqrt{A_{w}^{2}-4n^{2}\lambda_x^{2}+4n A_{w}\lambda_x}}{2n}.
\end{equation}
For large weak values the above expression can be approximated as $\lambda_{y}\approx \{(A_{w}-n\lambda_x)/2n,-\lambda_x\}$. The first root of $\lambda_y$ corresponds to the case of non-anomalous weak value amplification. Under this choice, the eigenvalues of the operator are of the order of weak value. However, the second choice, i.e., $\lambda_y\approx-\lambda_x$, corresponds to anomalous weak value amplification where the observable average with respect to the initial state approaches zero. The postselection probability for the optimal weak value turns out to be,
\begin{equation}
 p_{s}=n^{2}\lambda_x^{2}/A_{w}^{2}\label{de}.
\end{equation}

\begin{figure}[!ht]
        \centering{
            \resizebox{75mm}{!}{\includegraphics{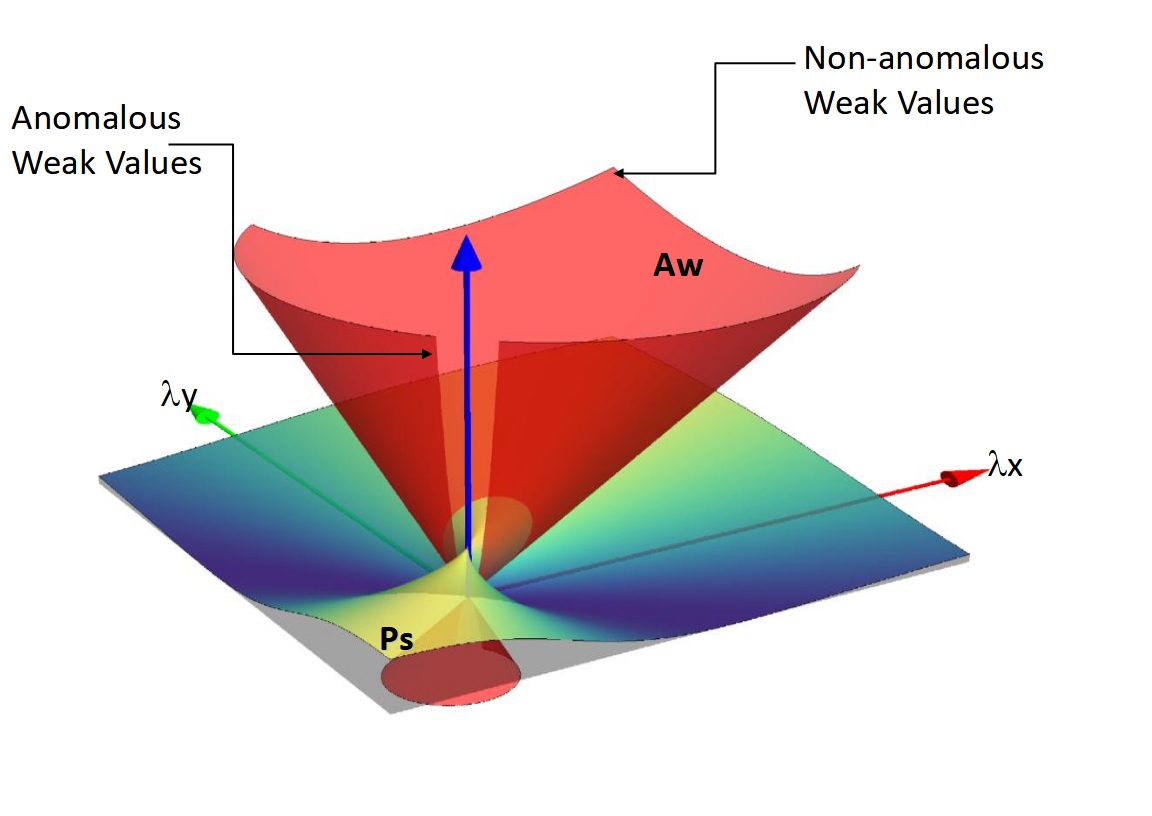}}
           \caption{Plot for optimal weak values and postselection probability for different initial state preparations characterized
by $\lambda_x$ and $\lambda_y$. The red surface corresponds to the optimal weak values $A_w$. The steep regions in this surface correspond to anomalous weak values, where they change rapidly upon small changes in the eigenvalues $\lambda_x$ and $\lambda_y$ due to anomalous amplification. In the non-anomalous region, the gradient is relatively low since the weak values scale with the eigenvalues here. The surface below stands for the corresponding success probability $p_s$.Blue regions stand for low, and the yellow regions stand for high success probability.  \color{black}}
\label{wkvalueimage}}
\end{figure}

Thus, the probability scales quadratically with $n$ and shows improvement in estimation while comparing with the classical cases.  Now, we will explore the exact opposite case where the postselction probability is held fixed with a varying $A_w$. Starting with the same initial entangled state and using Eq. (\ref{gg}), we arrive directly at
\begin{equation}
   \lambda_y=\frac{2\lambda_x\pm\sqrt{4\lambda_x^2-4(2p_{s}-1)^2\lambda_x^2}}{4p_{s}-2}.
\end{equation}

For very low $p_{s}$, this expression can be approximated as $\lambda_y\approx-\lambda_x$. Interestingly, we have traced the same approximated condition where anomalously large weak values appear. This yields the optimal weak value for a given probability of postselection
\begin{equation}\label{cd}
    \vert A_w\vert=n\lambda_x/\sqrt{p_{s}},
\end{equation}
scales linearly with $n$.  Clearly, the optimal values of $A_w$ and $p_s$ are appearing from a unified optimization process. In experiments, this unified optimization scheme would access large weak values at the cost of low probability and vice versa if the situation arises. Besides, this scheme also incorporates the cases where large eigenvalues of the system observable mostly do the amplification. This situation is referred to as the non-anomalous regime. In this regime, the weak values appear at the level of average values of the observable without any anomalous amplification. This would help us visualize the transition from weak values to average values, which is caused by the choice of preselection and postselection of the system in this scenario.

\section{Example demonstrating the geometric optimization of WVA}
\begin{figure}[ht]
	\centering
	\includegraphics[width=6cm,height=6cm]{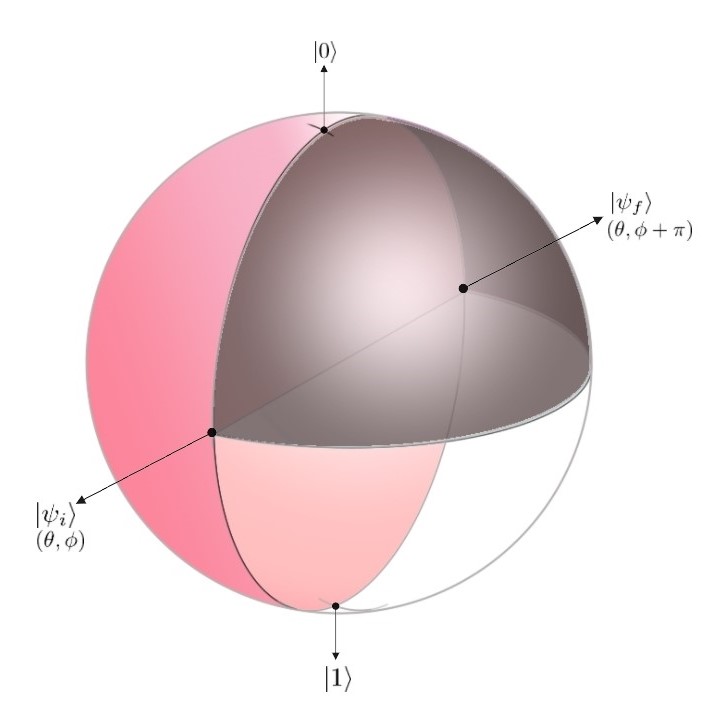}
	\caption{\textbf{Geometric phases associated with weak value amplification in 2-d Bloch sphere.}The grey region is a lune in the 2-d Bloch sphere with dihedral angle $\theta$; so it has area $2\theta$. This region is subtended by the state transition $\ket{\psi_i}\rightarrow \ket{0}\rightarrow \ket{\psi_f} \rightarrow \ket{\psi_i}$ which creates a geometric phase factor $\text{exp}(-i\theta)$. The pink region is a half sphere with area $2\pi$. A state transition $\ket{\psi_i}\xrightarrow[]{\text{via} \; \ket{0}} \ket{1}\xrightarrow[]{\text{via} \; \ket{0}}\ket{\psi_f}\xrightarrow[]{}\ket{\psi_i}$ subtends the grey and pink region together in Bloch sphere creating a geometric phase factor $-\text{exp}(-i\theta)$. These two state transitions generates the optimal weak value amplification when a spin 1/2 system is prepared in $\ket{\psi_i}$ with latitude and longitude $(\theta,\phi)$ in 2-d Bloch sphere and the system observable is $\sigma_z$.}
	\label{fig:geom}
\end{figure}

We take a spin-1/2 system as our reference for this example since the one qubit Bloch sphere is the simplest state-space geometry to visualize. The $\hat{\sigma}_z$ operator is taken as the system observable. Now we prepare the system in a state $\ket{\psi_i}=\text{cos}(\theta/2)\ket{0}+e^{i\phi}\text{sin}(\theta/2)\ket{1}$ which is represented as a point with latitude and longitude $(\theta,\phi)$ in 2-d Bloch sphere. The north and south poles refer to the $\ket{0}$ and $\ket{1}$ states, respectively. 

Now, the optimal efficiency in weak value is attained when the system is postselected in $\ket{\psi_f}=\text{cos}(\theta/2)\ket{0}-e^{i\phi}\text{sin}(\theta/2)\ket{1}$ which specifically refers to the point in Bloch sphere with coordinates $(\theta,\phi+\pi)$. Since in the optimal limit, the geometric phases $\Phi_g^{a_k}$ depend on the eigenvalues of the observable via sign functions, we can expect the following in the optimal limit,
\begin{align} 
\label{geometriceq1}
 \text{exp}[i \;\Phi_{\text{g}}^{1}\Big(\ket{\psi_i}_{\theta,\phi},\ket{0},\ket{\psi_f}_{\theta,\phi+\pi}\Big)] &=  \chi \\ 
 \text{exp}[i \;\Phi_{\text{g}}^{-1}\Big(\ket{\psi_i}_{\theta,\phi},\ket{1},\ket{\psi_f}_{\theta,\phi+\pi}\Big)] &=  -\chi,
\end{align}
where $\chi$ is some constant independent of the eigenvalues of the observable. The grey region in Figure \ref{fig:geom} is subtended by the transition path $\ket{\psi_i}\rightarrow \ket{0}\rightarrow \ket{\psi_f} \rightarrow \ket{\psi_i}$. All the transitions are executed through great circles in 2-d Bloch sphere.  The area of this region is $2\theta$, so this region will create a geometric phase factor $\text{exp}(-i\theta)$. This geometric phase corresponds to the first state transition in Eq.~(\ref{geometriceq1}). We take the second transition $\ket{\psi_i}\xrightarrow[]{\text{via} \; \ket{0}} \ket{1}\xrightarrow[]{\text{via} \; \ket{0}}\ket{\psi_f}\xrightarrow[]{}\ket{\psi_i}$. This transition subtends the black and the pink region together. The pink region is a half sphere, so it has area $2\pi$. Thus, the total area covered by the second transition is $2\pi+2\theta$. This creates a total geometric phase factor $\text{exp}(-i(2\pi+2\theta)/2)$ which equals $-\text{exp}(-i\theta)$. Thus taking $\chi = \text{exp}(-i\theta)$ reconciles this geometric understanding of weak value optimization with standard weak value optimization procedure.

\section{ Relation between postselected Fisher information and weak values} \label{App:QuasDer}
\color{black}The postselected quantum Fisher information is given by
\begin{equation}
\label{Eq:FishQpsShort2}
\mathcal{I}_{Q}(\theta\vert\psi_{\theta}^{ps}) = 4\langle\dot{\Psi}_{\theta}^{ps}\vert \dot{\Psi}_{\theta}^{ps}\rangle\frac{1}{p_{\theta}^{ps}}-4\vert\langle\dot{\Psi}_{\theta}^{ps}\vert\Psi_{\theta}^{ps}\rangle\vert^{2}\frac{1}{(p_{\theta}^{ps})^{2}},
\end{equation}
where unnormalized postselected quantum state is $\ket{\Psi_{\theta}^{ps}} = \hat{F} \hat{U}(\theta) \ket{\psi_0} $, where $\ket{\psi_0}\bra{\psi_0} \equiv \hat{\rho}_0$. $p_{\theta}^{ps} = \mathrm{Tr}(\hat{F}\hat{\rho}_{\theta})$ is the probability of postselection. In this supplementary note, we show that how the postselected Fisher information and weak values are related. Before doing this, we quickly recap the way of re-expressing Eq.~(\ref{Eq:FishQpsShort2}) in terms of the operator form. The first term of Eq.~(\ref{Eq:FishQpsShort2}) is 
\begin{align}
 \frac{4}{p_{\theta}^{ps}} \langle\dot{\Psi}_{\theta}^{ps}\vert \dot{\Psi}_{\theta}^{ps}\rangle
& =  \frac{4}{p_{\theta}^{ps}} \mathrm{Tr}\big( \hat{F}\dot{\hat{U}}(\theta)  \hat{\rho}_0 \dot{\hat{U}}^{\dagger}(\theta)  \hat{F}^{\dagger} \big) =  \frac{4}{p_{\theta}^{ps}} \mathrm{Tr}\Big( \hat{F} \hat{A} \hat{\rho}_{\theta} \hat{A} \Big)
 \label{Eq:AppTr1}
 \end{align}
We can express $\hat{A}$ and $\hat{F}$ in their corresponding eigendecompositions to rewrite Eq.~(\ref{Eq:AppTr1}) in terms of the doubly extended Kirkwood-Dirac quasiprobability distribution. Similarly, the second term of Eq.~(\ref{Eq:FishQpsShort2}) is
\begin{align}
 \frac{4}{(p_{\theta}^{ps})^2} \vert\langle\dot{\Psi}_{\theta}^{ps}\vert\Psi_{\theta}^{ps}\rangle\vert^{2}
=  \frac{4}{(p_{\theta}^{{ps}})^2} \big\vert  \mathrm{Tr}\big(  \hat{F} \hat{\rho}_{\theta} \hat{A} \big)\big\vert^2
\end{align} 
Combining the expressions above we obtain:
\begin{equation}
\mathcal{I}_Q(\theta \vert \psi_{\theta}^{ps}) = \frac{4}{p_{\theta}^{ps}} \mathrm{Tr}\Big( \hat{F} \hat{A} \hat{U}(\theta) \hat{\rho}_{0} \hat{U}(\theta)^{\dagger} \hat{A} \Big) - 
\frac{4}{(p_{\theta}^{ps})^2} \Big\vert  \mathrm{Tr}\Big(  \hat{F} \hat{U}(\theta) \hat{\rho}_{0} \hat{U}(\theta)^{\dagger} \hat{A} \Big)\Big\vert^2 .
\end{equation}
Rearranging Eq.~(7) in the main text along with the replacement of total probability $p_{\theta}^{ps}$ as a sum of individual probabilities when the postselction is done with a rank-2 projector,
\begin{equation}\label{ghosh}
        \mathcal{I}_Q(\theta \vert \psi_{\theta}^{ps})(p_{\theta}^{ps})^2=4\Big[(p_{\theta1}^{ps}+p_{\theta2}^{ps})\Big(\vert\langle{\psi_{\theta}}\vert\hat{A}\vert{f_{1}}\rangle\vert^{2}+\\ 
      ~  \vert\langle{\psi_{\theta}}\vert\hat{A}\vert{f_{2}}\rangle\vert^{2}\Big)-\vert\langle{\psi_{\theta}}\vert\hat{A}\vert{f_{1}}\rangle\\
      ~\langle f_{1}\vert\psi_{\theta}\rangle +
      \langle{\psi_{\theta}}\vert\hat{A}\vert{f_{2}}\rangle\langle f_{2}\vert\psi_{\theta}\rangle\vert^{2}\Big].
\end{equation}
Simplifying Eq.~(\ref{ghosh}) we find:
\begin{equation}
    \begin{split}
       \mathcal{I}_Q(\theta \vert \psi_{\theta}^{ps})(p_{\theta}^{ps})^2=4\Big[p_{\theta 1}^{ps}\vert\langle{\psi_{\theta}}\vert\hat{A}\vert{f_{1}}\rangle\vert^{2}+p_{\theta 2}^{ps}\vert\langle{\psi_{\theta}}\vert\hat{A}\vert{f_{2}}\rangle\vert^{2}+p_{\theta 2}^{ps}\vert\langle{\psi_{\theta}}\vert\hat{A}\vert{f_{1}}\rangle\vert^{2}+p_{\theta 1}^{ps}\vert\langle{\psi_{\theta}}\vert\hat{A}\vert{f_{2}}\rangle\vert^{2}\\
       ~-p_{\theta 1}^{ps}\vert\langle{\psi_{\theta}}\vert\hat{A}\vert{f_{1}}\rangle\vert^{2}-p_{\theta 2}^{ps}\vert\langle{\psi_{\theta}}\vert\hat{A}\vert{f_{2}}\rangle\vert^{2}-\langle\psi_{\theta}\vert\hat{A}\vert f_{1}\rangle\langle f_{1}\vert\psi_{\theta}\rangle\langle\psi_{\theta}\vert f_{2}\rangle\langle f_{2}\vert\hat{A}\vert\psi_{\theta}\rangle\\
       ~-\langle\psi_{\theta}\vert\hat{A}\vert f_{2}\rangle\langle f_{2}\vert\psi_{\theta}\rangle\langle\psi_{\theta}\vert f_{1}\rangle\langle f_{1}\vert\hat{A}\vert\psi_{\theta}\rangle\Big].
    \end{split}
\end{equation}
This expression can be written in a more concise way
\begin{equation}
         \mathcal{I}_Q(\theta \vert \psi_{\theta}^{ps})(p_{\theta}^{ps})^2=4\Big[p_{\theta 1}^{ps}p_{\theta 2}^{ps}\vert A_{w}^{1}\vert^2+p_{\theta 1}^{ps}p_{\theta 2}^{ps}\vert A_{w}^{2}\vert^2-p_{\theta 1}^{ps}p_{\theta 2}^{ps}\\
        ~A_{w}^{2}A_{w}^{1*}-p_{\theta 1}^{ps}p_{\theta 2}^{ps}A_{w}^{1}A_{w}^{2*}\Big],
\end{equation}
where $A_w^{k}=\frac{\langle\psi_\theta\vert\hat{A}\vert f_k\rangle}{\langle\psi_\theta\vert f_k\rangle}$. Further simplification leads to
\begin{equation}
      \mathcal{I}_Q(\theta \vert \psi_{\theta}^{ps})(p_{\theta}^{ps})^2=4p_{\theta 1}^{ps}p_{\theta 2}^{ps}\vert A_{w}^{1}-A_{w}^{2}\vert^2.
\end{equation}
Repeating the same calculation for rank-3 projector, we obtain:
\begin{equation}
      \mathcal{I}_Q(\theta \vert \psi_{\theta}^{ps})(p_{\theta}^{ps})^2=4\Big[p_{\theta 1}^{ps}p_{\theta 2}^{ps}\vert A_{w}^{1}-A_{w}^{2}\vert^2+p_{\theta 2}^{ps}p_{\theta 3}^{ps}\vert A_{w}^{2}-A_{w}^{3}\vert^2+ \\
      ~p_{\theta 3}^{ps}p_{\theta 1}^{ps}\vert A_{w}^{3}-A_{w}^{1}\vert^2\Big].
\end{equation}
From these two expressions we can easily extrapolate a generalized formula for rank-$n$ projector
\begin{equation}
      \mathcal{I}_Q(\theta \vert \psi_{\theta}^{ps})(p_{\theta}^{ps})^2=4\sum_{m<n}p_{\theta m}^{ps}p_{\theta n}^{ps}\vert A_w^{m}-A_w^{n}\vert^2.
\end{equation}

\section{Efficiency of the postselected metrology}
We begin by writing the the evolved pure state in the form as follows,
\begin{equation}
    \hat{U}(\theta)\hat{\rho}_0\hat{U}(\theta)^\dagger=\ket{\psi_\theta}\bra{\psi_\theta}.
\end{equation}
Rewriting the first part of RHS in Eq.~(7) in the main text followed by a decomposition for the postselection projectors yield,
\begin{equation}
    \mathrm{Tr}\Big( \hat{F} \hat{A} \hat{U}(\theta) \hat{\rho}_{0} \hat{U}(\theta)^{\dagger} \hat{A} \Big)= \sum_{\substack{f_k\in \mathcal{F}^{ps}}} \vert \langle \psi_\theta \vert \hat{A}\vert f_k \rangle \vert^2 .
\end{equation}
Using weak value optimization and rewriting the previous equation we get,
\begin{equation}
    \mathrm{Tr}\Big( \hat{F} \hat{A} \hat{U}(\theta) \hat{\rho}_{0} \hat{U}(\theta)^{\dagger} \hat{A} \Big) = \sum_{\substack{f_k\in \mathcal{F}^{ps}}} p^{ps}_{\theta k}\vert A_{w}^k\vert^2 .
\end{equation}
This expression establishes a connection between the efficiency of weak value amplification and the efficiency of postselected quantum metrology. Using similar method we can derive the following regarding the second term in RHS of Eq.~(39) in the main text
\begin{equation}
    \mathrm{Tr}\Big(  \hat{F} \hat{U}(\theta) \hat{\rho}_{0} \hat{U}(\theta)^{\dagger} \hat{A} \Big) = \sum_{{f_k}\in \mathcal{F}^{ps} } \langle \psi_\theta \vert \hat{A} \vert f_k\rangle \langle f_k \vert \psi_\theta \rangle .
\end{equation}
Rewriting the same equation after a spectral decomposition of $\hat{A}$ we obtain,
\begin{equation}
    \mathrm{Tr}\Big(\hat{F} \hat{U}(\theta) \hat{\rho}_{0} \hat{U}(\theta)^{\dagger} \hat{A} \Big) = \sum_m \sum_{{f_k}\in \mathcal{F}^{ps} } a_m \langle \psi_\theta \vert a_m \rangle \langle a_m \vert f_k\rangle \langle f_k \vert \psi_\theta \rangle = \sum_m \sum_{{f_k}\in \mathcal{F}^{ps} } a_m q_{\ket{\psi_\theta}}^{(a_m,f_k)}.
\end{equation}
Putting these expressions altogether, we can easily obtain Eq.~(39) in the main text.

\section{Positivity of standard and doubly extended Kikwood-Dirac distribution}
In this section we discuss the correspondence between KD and doubly extended KD positivity for pure states. The doubly extended KD distribution for some pure state $\ket{\psi}$ is defined for three arbitrary orthonormal set of projectors $\{\hat{A_j}\}$, $\{\hat{B_k}\}$ and $\{\hat{C_l}\}$ where $j,k,l$ belong to the same index set,
\begin{equation}
    q^{\ket{\psi}}_{a_j,b_k,c_l} = Tr(\ket{\psi}\bra{\psi}\hat{A_j}\hat{C_l}\hat{B_k}).
\end{equation}
Similar to the KD distribution this distribution can be compared with four vertex Bergman invariant. 

In Eq.~(6) in the main text, first and second arguments belong to the same basis set. Rewriting the distribution taking $\hat{C_k}$ as the postselection projectors yields,
\begin{equation}
    q^{\ket{\psi}}_{a_j,a_l,f_k} = \langle \psi \vert a_j \rangle \langle a_j\vert f_k \rangle \langle f_k \vert a_l \rangle \langle a_l \vert \psi \rangle .
\end{equation}
Suppose it is given that our KD distribution $q^{\ket{\psi}}_{a_m,f_k}:=\langle \psi \vert a_m \rangle \langle a_m\vert f_k \rangle \langle f_k \vert \psi \rangle$ is positive everywhere. We rewrite the previous equation after multiplying $\vert  \langle f_k \vert \psi \rangle  \vert^2$ in the numerator and the denominator as follows,
\begin{equation}
    \begin{split}
        q^{\ket{\psi}}_{a_j,f_k,a_l} = \frac{\langle \psi \vert a_j \rangle \langle a_j\vert f_k \rangle \langle f_k\vert \psi \rangle \langle \psi \vert f_k \rangle \langle f_k \vert a_l \rangle \langle a_l \vert \psi \rangle}{\vert  \langle f_k \vert \psi \rangle  \vert^2} 
        =\frac{q^{\ket{\psi}}_{a_j,f_k}(q^{\ket{\psi}}_{a_l,f_k})^*}{\vert  \langle f_k \vert \psi \rangle  \vert^2}.
    \end{split}
\end{equation}
From these equations, we can explicitly see that all the terms in this doubly extended KD distribution are positive, given that the corresponding state is pure and the KD distribution is positive.

\section{Derivation of the relationship between efficiency of WVA and geometric phase}
We start with the expression for geometric phase,
\begin{equation}
\label{e1}
    \begin{split}
        \Phi_{\text{g}}^{a_k}(\ket{\psi_i},\ket{a_k},\ket{\psi_f}) = \text{arg}(\langle \psi_f \vert a \rangle \langle a \vert \psi_i \rangle \langle \psi_i \vert \psi_f \rangle) \\
        =\text{arg}(\langle \psi_f \vert a \rangle \langle a \vert \psi_i \rangle) + \text{arg}(\langle \psi_i \vert \psi_f \rangle).
    \end{split}
\end{equation}
Rearranging Eq.~(\ref{e1}), we obtain,
\begin{equation}
\label{e2}
    \text{arg}(\langle \psi_f \vert a \rangle \langle a \vert \psi_i \rangle) = \Phi_{\text{g}}^{a_k}(\ket{\psi_i},\ket{a_k},\ket{\psi_f}) - \text{arg}(\langle \psi_i \vert \psi_f \rangle) .
\end{equation}
 Rewriting the efficiency of WVA in terms of geometric phase,
\begin{equation}
\label{e3}
\begin{split}
    \eta(p_s,A_w) & = \big\vert \langle \psi_f \vert \hat{A} \vert \psi_i\rangle \big\vert^2 \\
    & = \Big\vert \sum_k a_k \langle \psi_f \vert a_k \rangle \langle a_k \vert \psi_i\rangle \Big\vert^2 \\
    & = \Big\vert \sum_k a_k \big\vert \langle \psi_f \vert a_k \rangle \langle a_k \vert \psi_i \rangle \big\vert \text{exp}(i \; \text{arg}(\langle \psi_f \vert a_k \rangle \langle a_k \vert \psi_i \rangle)) \Big\vert^2.
\end{split}
\end{equation}
Replacing Eq.~(\ref{e2})in Eq.~(\ref{e3}), we get,
\begin{equation}
    \eta(p_s,A_w)= \Big\vert \sum_k a_k \big\vert \langle \psi_f \vert a_k \rangle \langle a_k \vert \psi_i \rangle \big\vert \text{exp}(i\; [\Phi_{\text{g}}^{a_k}(\ket{\psi_i},\ket{a_k},\ket{\psi_f})-\text{arg}(\langle \psi_i \vert \psi_f \rangle)]) \Big\vert^2.
\end{equation}
Since $\langle \psi_i \vert \psi_f \rangle$ is not dependent on the eigenvalues of the system observable it creates a global phase and does not contribute in the efficiency. Thus we obtain,
\begin{equation}
    \eta(A_w,p_s) =\Big\vert\sum_{k} a_{k} \big\vert\langle\psi_{f}\vert a_{k}\rangle\langle a_{k} \vert\psi_{i}\rangle\big\vert\text{exp}(i\; \Phi_{g}^{a_k})\Big\vert^{2}.
\end{equation}

\end{document}